\newtheorem{theorem}{Theorem}[section]
\newtheorem{lemma}[theorem]{Lemma}
\newtheorem{remark}[theorem]{Remark}
\newenvironment{proof}{\noindent
  \textbf{Proof.}}{\hfill$\Box$\\}
\newcommand{\cut}[1]{}
\newcommand{\instr}[5]{\ensuremath{\hbox to 60 pt
    {${#1}$\hfil${#2}$\hfil$ \rightarrow
      $\hfil${#3}$\hfil${#4}$\hfil${#5}$}}}
\newcommand{\vp}{\ensuremath{\varphi}}
\newcommand{\rel}[1]{\ensuremath{\mathcal{#1}}}
\newcommand{\union}{\, \cup \,}
\newcommand{\inter}{\, \cap \,}
\newcommand{\con}{\wedge}
\newcommand{\imp}{\rightarrow}
\newcommand{\equivalence}{\leftrightarrow}
\newcommand{\bottom}{\perp}
\newcommand{\Var}{\textit{\texttt{Var}}}
\newcommand{\truth}{\ensuremath{\top}}
\newcommand{\falsehood}{\ensuremath{\bot}}
\newcommand{\cm}[1]{\ensuremath{\sf{#1}}}
\newcommand{\mmodel}[1]{\ensuremath{\frak{#1}}}
\newcommand{\states}[1]{\ensuremath{\mathcal{#1}}}
\newcommand{\sat}[3]{\ensuremath{\frak{#1}, #2 \models #3}}
\newcommand{\notsat}[3]{\ensuremath{\mmodel{#1}, #2 \not\models #3}}
\newcommand{\diam}[1]{\ensuremath{\langle #1  \rangle}\,}
\newcommand{\boxm}[1]{\ensuremath{[\, #1\, ]\, }}
\newcommand{\comp}{\, ;\,}
\newcommand{\choice}{\cup}
\newcommand{\psmodel}[1]{\ensuremath{\mathcal{M}^*}}
\newcommand{\pseudomodel}[1]{\ensuremath{\mathcal{M}^{**}}}
\newcommand{\sameas}{\ensuremath{\leftrightharpoons}}
\providecommand{\keywords}[1]{\textsc{\textsc{Keywords:}} #1}
\begin{document}

\title{Complexity and expressivity of propositional dynamic logics
  with finitely many variables\footnote{Prefinal version of the paper
    published in \textit{Logic Journal of the IGPL}, 26(5), 2018,
    pp.539--547. DOI
    \texttt{https://doi.org/10.1093/jigpal/jzy014}}}
\author[1]{Mikhail Rybakov} \author[2]{Dmitry Shkatov} \affil[1]{Tver
  State University and University of the Witwatersrand, Johannesburg,
  \texttt{m\_rybakov@mail.ru}} \affil[2]{University of the
  Witwatersrand, Johannesburg, \texttt{shkatov@gmail.com}} \date{}
\maketitle

\begin{abstract}
  We investigate the complexity of satisfiability for finite-variable
  fragments of propositional dynamic logics.  We consider three
  formalisms belonging to three representative complexity classes,
  broadly understood,---regular PDL, which is EXPTIME-complete, PDL
  with intersection, which is 2EXPTIME-complete, and PDL with parallel
  composition, which is undecidable.  We show that, for each of these
  logics, the complexity of satisfiability remains unchanged even if
  we only allow as inputs formulas built solely out of propositional
  constants, i.e. without propositional variables.  Moreover, we show
  that this is a consequence of the richness of the expressive power
  of variable-free fragments: for all the logics we consider, such
  fragments are as semantically expressive as entire logics.  We
  conjecture that this is representative of PDL-style, as well as
  closely related, logics.
\end{abstract}

\keywords{propositional dynamic logic, finite-variable fragments,
  satisfiability, computational complexity, undecidability,
  expressivity}

\section{Introduction}

The propositional dynamic logic, PDL, introduced in~\cite{FL79}, has
ever since been used for reasoning about the input-output behaviour of
terminating programs.  Over the years, it has been extended in various
ways to deal with a wider variety of terminating programs
\cite{Streett82, HS82, Vardi85, Peleg87, Goldblatt92}.  Also, various
formalisms closely linked to PDL have been developed for applications
in areas other than reasoning about programs; among them are knowledge
representation \cite{GL94,DLNN97,Massacci01}, querying semistructured
data \cite{ADdeR03}, data analysis \cite{CO85}, and linguistics
\cite{Kracht95}.

Clearly, the complexity of satisfiability---equivalently,
validity---problem for all of these variants of PDL is of crucial
importance to their applications in the above-mentioned domains.
Typically, for formulas containing an arbitrary number of
propositional variables, the complexity of satisfiability problem for
variants of PDL is rather high: it ranges from
EXPTIME-complete~\cite{FL79} to undecidable~\cite{TinBal2014}.

It has, however, been observed that, in practice, one rarely uses
formulas containing a large number of propositional
variables---usually, this number is rather small.  This raises the
question of whether the complexity of satisfiability for PDL can be
tamed by restricting the language to a finite number of propositional
variables.  Such an effect is not, after all, entirely unknown: for
many logics, the complexity of satisfiability goes down from
``intractable'' to ``tractable'' once we place a limit on the number
of propositional variables that can be used in the construction of
formulas.  For the classical propositional logic, as well as for the
normal extensions of the modal logic {\bf K5}~\cite{NTh75}, which
include logics {\bf K45}, {\bf KD45}, and {\bf S5} (see
also~\cite{Halpern95}), the complexity of satisfiability goes down
from NP-complete to polynomial-time computable once we restrict the
number of propositional variables to any finite number.  Similarly, as
follows from~\cite{Nishimura60}, the complexity of satisfiability for
intuitionistic propositional logic goes down from PSPACE-complete to
polynomial-time computable if we consider only formulas of one
variable.

The main contribution of the present paper is to show that for
propositional dynamic logics this route to reducing the complexity of
satisfiability seems to be closed: even formulas built out of
propositional constants, and thus containing no propositional
variables at all, are as hard to test for satisfiability as formulas
with an arbitrary number of propositional variables.  

We suspect that this behaviour is representative of PDL-style logics.
It would, however, be difficult to make an exhaustive case, given a
wild proliferation of such formalisms.  What we do instead is pick
three examples that are representative in the sense of their
satisfiability problems belonging to three representative complexity
classes (broadly understood, i.e., treating ``undecidable'' as a
complexity class); namely, we consider regular PDL, which has an
EXPTIME-complete satisfiability problem~\cite{FL79}, PDL with
intersection, which has a 2EXPTIME-complete satisfiability
problem~\cite{LL05}, and PDL with parallel composition, which has an
undecidable satisfiability problem~\cite{TinBal2014}.  We show that
satisfiability problem for the variable-free fragment of each of these
logics is as hard as for the entire logic.  Moreover, we show that
this is a consequence of the richness of the expressive power of
variable-free fragments: for all the logics we consider, variable-free
fragments are as semantically expressive as entire logics.

Similar results for other propositional modal logics have been
obtained in~\cite{BS93}, \cite{Halpern95}, \cite{Hem01}, \cite{DS02},
\cite{Sve03}, and \cite{ChRyb03}.  The techniques used in those
studies are not directly applicable to obtain the results presented in
this paper; we do, however, substantially draw on the ideas
from~\cite{BS93} and~\cite{Halpern95}.

The paper is organised as follows. In section~\ref{sec:pdl}, we recall
the syntax and semantics of the logics we consider.  Then, in
section~\ref{sec:finite-variable-fragments}, we present our results
about complexity and expressivity of their variable-free fragments.
We conclude in section~\ref{sec:conclusion}.

\section{Syntax and semantics}
\label{sec:pdl}

In this section, we recall the syntax and semantics of PDL with
intersection ({\bf IPDL}), regular PDL ({\bf PDL}), and PDL with
parallel composition ({\bf PRSPDL}).

The language of {\bf IPDL} contains a countable set
$\Var = \{p_1, p_2, \ldots \}$ of propositional variables, the
propositional constant $\bottom$ (``falsehood''), the Boolean
connective $\imp$, and modalities of the form $\boxm{\alpha}$, where
$\alpha$ ranges over program terms built out of a countable set
$AP = \{a_1, a_2, \ldots\}$ of atomic program terms as well as
formulas, using the operations ?  (test), $\comp$ (composition),
$\choice$ (choice), $\inter$ (intersection), and $^\ast$ (iteration).
The intended meaning of the formula $\boxm{\alpha} \varphi$ is that
every execution of the program $\alpha$ at the current state results
in a state where $\varphi$ holds.  Formulas $\varphi$ and program
terms $\alpha$ are simultaneously defined by the following BNF
expressions:
\[\vp := p  \mid \falsehood \mid  (\vp
\imp \vp) \mid \boxm{\alpha} \vp,\]
\[\alpha := a \mid \vp? \mid (\alpha \comp \alpha) \mid (\alpha \choice \alpha)
\mid (\alpha \inter \alpha) \mid \alpha^*, \]
where $p$ ranges over \Var\ and $a$ ranges over $AP$.  The other
connectives are defined as usual.  Formulas are evaluated in Kripke
models.  A Kripke model is a tuple
$\mmodel{M} = (\states{S}, \{\rel{R}_a\}_{a \in AP}, V)$, where
\states{S} is a non-empty set (of states), $\rel{R}_a$ is a binary
(accessibility) relation on \states{S}, and $V$ is a (valuation)
function $V: \Var \rightarrow 2^{\states{S}}$.  Accessibility
relations for non-atomic program terms as well as the satisfaction
relation between models, states, and formulas are defined by
simultaneous induction as follows:
\begin{itemize}
\item $(s, t) \in \rel{R}_{\vp?}$ \sameas\ $s = t$ and \sat{M}{s}{\vp};  
  \nopagebreak[3]
\item $(s, t) \in \rel{R}_{\alpha \comp \beta}$ \sameas\ $(s, u) \in
  \rel{R}_\alpha$ and $(u, t) \in \rel{R}_\beta$, for some $u \in
  \states{S}$; \nopagebreak[3]
\item $(s, t) \in \rel{R}_{\alpha \choice \beta}$ \sameas\ $(s, t) \in
  \rel{R}_\alpha$ or $(s, t) \in \rel{R}_\beta$;
  \nopagebreak[3]
\item $(s, t) \in \rel{R}_{\alpha \inter \beta}$ \sameas\ $(s, t) \in
  \rel{R}_\alpha$ and $(s, t) \in \rel{R}_\beta$;
  \nopagebreak[3]
\item $(s, t) \in \rel{R}_{\alpha^*}$ \sameas\
  $(s, t) \in \rel{R}^\ast_{\alpha}$, where $\rel{R}^\ast_{\alpha}$ is
  the reflexive, transitive closure of $\rel{R}_{\alpha}$;
  %, where
  %$(s, t) \in \rel{R}^0$ \sameas\ $s = t$ and
  %$(s, t) \in \rel{R}^{n+1}$ \sameas\
  %$(s, t) \in \rel{R}^n \circ \rel{R} $;
   \nopagebreak[3]
  \item \sat{M}{s}{p_i} \sameas\ $s \in V(p_i)$;
\nopagebreak[3]
\item \sat{M}{s}{\falsehood} never holds;
\nopagebreak[3]
\item \sat{M}{s}{\vp \imp \psi} \sameas\ \sat{M}{s}{\vp} implies
  \sat{M}{s}{\psi};
\nopagebreak[3]
\item \sat{M}{s}{\boxm{\alpha} \vp} \sameas\ \sat{M}{t}{\vp} whenever
  $(s, t) \in \rel{R}_{\alpha}$.
\end{itemize}
A formula is satisfiable if it is satisfied at some state of some
model. A formula is valid if it is satisfied by every state of every
model. Formally, by {\bf IPDL}, we mean the set of all valid formulas
in this language.
%For more background on {\bf IPDL}, we refer the
%reader to~\cite{LL05}.

The language of {\bf PDL} differs from that of {\bf IPDL} in that it
does not contain program operations $\inter$ and ?.  The semantics is
modified accordingly.

The language of {\bf PRSPDL} is interpreted on models made up of
states possessing inner structure: a state $s$ is a composition
$x \ast y$ of states $x$ and $y$ if $s$ can be separated into
components $x$ and $y$; in general, there is no requirement that,
given states $x$ and $y$, a composition $x \ast y$ is a unique
state. The program terms are formed out of atomic program terms as
well as four special program terms $r_1$, $r_2$ (recovery of the first
and second \mbox{$\ast$-com\-po\-nents}, respectively, of a state),
$s_1$, and $s_2$ (storing a state as the first and second
\mbox{$\ast$-com\-po\-nents}, respectively, of a composite state),
using the operations ? (test), ${}^\ast$ (iteration), and $||$
(parallel composition).  Note that the language of {\bf PRSPDL} does
not contain the operation of union of program terms.  A Kripke model
is a tuple
$\mmodel{M} = (\states{S}, \{\rel{R}_a\}_{a \in AP}, \ast, V)$, where
\states{S}, $\rel{R}_a$, and $V$ have the same meaning as in Kripke
models for {\bf IPDL}, and $\ast$ is a function
$\states{S} \times \states{S} \to 2^{\states{S}}$.  The meaning of
$||$, $r_1$, $r_2$, $s_1$, and $s_2$ is given by the following
clauses:
\begin{itemize}
\item $(s, t) \in \rel{R}_{\alpha\, ||\, \beta}$ \sameas\ there exist
  $x_1, y_1, x_2, y_2 \in \states{S}$ such that $s \in x_1 \ast x_2$,
  $t \in y_1 \ast y_2$, $(x_1, y_1) \in \rel{R}_\alpha$, and
  $(x_2, y_2) \in \rel{R}_\beta$; \nopagebreak[3]
\item $(s, t) \in \rel{R}_{r_1}$ \sameas\ there exists $u \in
  \states{S}$ such that $s \in t \ast u$;
  \nopagebreak[3]
\item $(s, t) \in \rel{R}_{r_2}$ \sameas\ there exists $u \in
  \states{S}$ such that $s \in u \ast t$;
  \nopagebreak[3]
\item $(s, t) \in \rel{R}_{s_1}$ \sameas\ there exists $u \in
  \states{S}$ such that $t \in s \ast u$;
  \nopagebreak[3]
\item $(s, t) \in \rel{R}_{s_2}$ \sameas\ there exists $u \in
  \states{S}$ such that $t \in u \ast s$.
\end{itemize}
The models thus defined are referred to in~\cite{TinBal2014} as
``\mbox{$\ast$-se\-pa\-ra\-ted}.'' The authors of~\cite{TinBal2014} consider a
number of logics in the same language, which differ in the conditions
placed on the function $\ast$ in their semantics. For our purposes,
it suffices to consider only one of the logics
from~\cite{TinBal2014},---the rest can be dealt with in a similar way.

The notions of satisfiability and validity are defined as for {\bf
  IPDL} and {\bf PDL}. 

For each of the logics we consider, by a variable-free fragment we
mean the subset of the logic containing only variable-free
formulas---i.e., formulas not containing any propositional variables.
Given formulas $\vp$, $\psi$ and a propositional variable $p$, we
denote by $\vp(p/\psi)$ the result of uniformly substituting $\psi$
for $p$ in $\vp$.

\section{Finite-variable fragments}
\label{sec:finite-variable-fragments}

In this section, we show that variable-free fragments of {\bf IPDL},
{\bf PDL}, and {\bf PRSPDL} have the same expressive power and
computational complexity as the entire logics, by embedding each logic
into its variable-free fragment; in the case of {\bf IPDL} and {\bf
  PDL}, the embeddings are polynomial-time computable.  We initially
work with {\bf IPDL} and subsequently point out how that work carries
over to {\bf PDL} and {\bf PRSPDL}.

Let $\vp$ be an arbitrary {\bf IPDL}-formula.  Assume that $\vp$ only
contains propositional variables $p_1, \ldots, p_n$ and atomic program
terms $a_1, \ldots, a_l$.  Let
$\gamma = a_1 \union \ldots \union a_l$. First, recursively define
translation $\cdot'$ as follows:
\begin{center}
  \begin{tabular}{llll}
    ${a_j}'$ & = & $a_j,$ $\mbox{~~where~} j \in \{1, \ldots, l \}$; \\
    $(\alpha \comp \beta)'$ & = & $\alpha' \comp \beta'$; & \\
    $(\alpha \choice \beta)'$ & = & $\alpha' \choice \beta'$; & \\
    $(\alpha \inter \beta)'$ & = & $\alpha' \inter \beta'$; & \\
    $(\alpha^*)'$ & = & $(\alpha')^*$; & \\
    $(\phi?)'$ & = & $(\phi')?$; & \\
    ${p_i}'$ & = & $p_i,$ $\mbox{~~where~} i \in \{1, \ldots, n \}$; \\
    $(\bottom)'$ & = & $\bottom$; & \\
    $(\phi \imp \psi)'$ & = & $\phi' \imp \psi'$; & \\
    $(\boxm{\alpha} \phi)'$ & = & $\boxm{\alpha'} (p_{n+1} \imp \phi')$. & 
\end{tabular}
\end{center}
Second, define
$$ \Theta = p_{n+1} \con \boxm{\gamma^*} (\diam{\gamma} p_{n+1} \imp
p_{n+1}). $$
Finally, let $$\widehat{\vp} = \Theta \con \vp'. $$

\begin{lemma}
  \label{lem:varphi-truth}
  Formula $\varphi$ is satisfiable if, and only if, formula
  $\widehat{\vp}$ is satisfiable.
\end{lemma}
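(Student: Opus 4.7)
The plan is to prove the two directions of the biconditional separately. For the easy direction, assume $\sat{M}{s}{\vp}$ and modify $\mmodel{M}$ into $\mmodel{M}^{*}$ by reinterpreting $p_{n+1}$ to hold at every state, leaving the rest unchanged. Then $\Theta$ holds vacuously at $s$, and since in $\mmodel{M}^{*}$ every subformula of the form $p_{n+1}\imp\phi'$ collapses to $\phi'$, a routine simultaneous induction on programs and formulas yields $\sat{M^{*}}{w}{\phi'}\Leftrightarrow\sat{M^{*}}{w}{\phi}$ for every subformula $\phi$ of $\vp$ and every state $w$; applied at $\vp$ and $s$ this delivers $\sat{M^{*}}{s}{\widehat{\vp}}$.

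For the nontrivial converse, suppose $\sat{M}{s}{\widehat{\vp}}$. I would take as the new model the submodel $\mmodel{M}'$ of $\mmodel{M}$ whose states are exactly the worlds that are $\gamma^{*}$-reachable from $s$ in $\mmodel{M}$ and satisfy $p_{n+1}$, with atomic relations and the valuations of $p_{1},\ldots,p_{n}$ inherited by restriction. The central observation is that, by contraposition, the second conjunct of $\Theta$ says: for every $w$ that is $\gamma^{*}$-reachable from $s$, if $w\not\models p_{n+1}$ then no $\gamma$-successor of $w$ lies in $p_{n+1}$. Iterating, every $\gamma$-path from $s$ that leaves the $p_{n+1}$-region never returns; equivalently, every $\gamma$-path from $s$ ending in a $p_{n+1}$-state passes only through $p_{n+1}$-states.

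With this closure property in hand, I would establish by simultaneous induction on subprograms $\alpha$ and subformulas $\phi$ of $\vp$ that, for all states $w$ of $\mmodel{M}'$, (i) $(w,t)\in\rel{R}^{\mmodel{M}'}_{\alpha}$ iff $t$ is a state of $\mmodel{M}'$ and $(w,t)\in\rel{R}^{\mmodel{M}}_{\alpha'}$, and (ii) $\sat{M'}{w}{\phi}$ iff $\sat{M}{w}{\phi'}$. The clauses for atomic programs, choice, intersection, tests, and the Boolean and atomic cases of (ii) follow directly from the induction hypothesis; the modal clause of (ii) matches by design, since $(\boxm{\alpha}\phi)' = \boxm{\alpha'}(p_{n+1}\imp\phi')$ quantifies in $\mmodel{M}$ over precisely those $\alpha'$-successors that lie in $\mmodel{M}'$. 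The real work is the composition and iteration cases of (i): if $(w,t)\in\rel{R}^{\mmodel{M}}_{(\beta\comp\delta)'}$ via some intermediate $u$, then $u$ is $\gamma^{*}$-reachable from $w$---hence from $s$---because $\beta'$ uses only the atomic programs listed in $\gamma$ and tests do not move states, and if moreover $t$ lies in $\mmodel{M}'$ then the closure property forces $u\models p_{n+1}$, placing $u$ in $\mmodel{M}'$. Applying (ii) with $\phi=\vp$ and $w=s$, and noting $s\models p_{n+1}$ by the first conjunct of $\Theta$, yields $\sat{M'}{s}{\vp}$.

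The main obstacle is precisely this composition/iteration step: without the closure guarantee, $\mmodel{M}'$ could fail to witness $\alpha$-transitions whose intermediate states fall outside $p_{n+1}$, and the inductive equivalence would break. Recognising that $\Theta$ is engineered to rule this out, and verifying that intermediate states remain inside the $\gamma^{*}$-cone of $s$ where $\Theta$ applies, is the crucial technical point; everything else is routine bookkeeping.
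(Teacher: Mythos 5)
Your proposal is correct and follows essentially the same route as the paper: for the nontrivial direction you restrict \mmodel{M} to the $p_{n+1}$-states $\gamma^*$-reachable from the satisfying state (which, given $\Theta$, coincides with the paper's smallest submodel) and establish the same truth-preservation equivalence between $\psi$ in the submodel and $\psi'$ in \mmodel{M}, in fact spelling out the crucial composition/iteration step in more detail than the paper does. The only cosmetic difference is in the easy direction, where the paper argues syntactically via closure of {\bf IPDL} under the substitution $p_{n+1}/\top$ while you argue semantically by re-valuing $p_{n+1}$ to be true everywhere---two formulations of the same idea.
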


\begin{proof}
  Suppose $\widehat{\vp}$ is not satisfiable.  Then,
  $\neg \widehat{\vp} \in {\bf IPDL}$ and, since {\bf IPDL} is closed
  under substitution,
  $\neg \widehat{\vp} (p_{n+1} / \top) \in {\bf IPDL}$.  As
  $\widehat{\vp} (p_{n+1} / \top) \equivalence \vp \in {\bf IPDL}$, we
  have $\neg \vp \in {\bf IPDL}$; thus, $\vp$ is not satisfiable.

  Suppose that $\widehat{\vp}$ is satisfiable. In particular, let
  $\sat{M}{s_0}{\widehat{\vp}}$ for some model $\mmodel{M}$ and some
  $s_0$ in $\mmodel{M}$. Define $\mmodel{M}'$ to be the smallest
  submodel of $\mmodel{M}$ such that
  \begin{itemize}
  \item $s_0$ is in $\mmodel{M'}$;
  \item if $x$ is in $\mmodel{M'}$, $x \rel{R}_{\gamma} y$, and
    \sat{M}{y}{p_{n+1}}, then $y$ is also in $\mmodel{M'}$.
  \end{itemize}
  Notice that $p_{n+1}$ is universally true in $\mmodel{M}'$.  It is
  straightforward to show that, for every subformula $\psi$ of $\vp$
  and every $s$ in $\mmodel{M}'$, we have $\sat{M}{s}{\psi'}$ if, and
  only if, $\sat{M'}{s}{\psi}$.  As $\sat{M}{s_0}{\vp'}$, this gives
  us $\sat{M'}{s_0}{\vp}$; hence, $\vp$ is satisfiable.
\end{proof}

\begin{remark}
  \label{remark:pn+1}
  It follows from the proof of Lemma~\ref{lem:varphi-truth} that, if
  $\widehat{\vp}$ is satisfiable, then it is satisfiable in a model
  where $p_{n+1}$ is universally true.  Indeed, if $\widehat{\vp}$ is
  satisfiable, then $\vp$ is satisfiable in a model where $p_{n+1}$ is
  universally true. The claim follows from the fact that $\vp$ is
  equivalent to $\widehat{\vp} (p_{n+1} / \top)$.
\end{remark}

Now, consider the following class $\cm{M}$ of finite models.  Let $b$
be the lexicographically first atomic program term of $\varphi$ if
$\varphi$ contains such terms; otherwise, let $b$ be $a_1$.  For every
$m \in \{1, \ldots, n + 1\}$, where $p_1, \ldots, p_n$ are the
variables in $\varphi$, class $\cm{M}$ contains a unique member
$\mmodel{M}_m$, defined as follows:
$\mmodel{M}_m = (\states{S}_m, \{\rel{R}_a\}_{a \in AP}, V_m)$, where
  \begin{itemize}
  \item $\states{S}_m = \{r_m, t^m, s_1^m, s_2^m, \ldots, s_{m}^m\}$;
  \item $\rel{R}_{b}$ is the transitive closure of the relation $\{
    \langle r_m, t^m \rangle, \langle t^m, t^m \rangle, \langle r_m,
    s_1^m \rangle \} \linebreak \union \{ \langle s_i^m, s_{i+1}^m
    \rangle : 1 \leq i \leq m - 1 \}$;
  \item $\rel{R}_{a} = \varnothing$ if $a \neq b$;
  \item $V_m (p) = \varnothing$ for every $p \in \Var$.
  \end{itemize}

  The model $\mmodel{M}_m$ is depicted in Figure~\ref{fig_Mm}, where
  arrows represent $\rel{R}_{b}$; to avoid clutter, arrows are omitted
  whenever the presence of $\rel{R}_{b}$ can be deduced from its
  transitivity; the circle represents a state related by $\rel{R}_{b}$
  to itself, and solid dots represent states without such loops.

  \begin{figure}
\begin{center}
\begin{picture}(100,195)
\put(59.5,10){$\bullet$}
\put(59.5,50){$\bullet$}
\put(59.5,90){$\bullet$}
\put(59.5,172){$\bullet$}
\put(20,50){$\circ$}
\put(70,10){$r_m\models A_m$}
\put(45,10){}
\put(45,10){}
\put(42,50){}
\put(45,90){}
\put(42,130){}
\put(45,190){}
\put(2,50){}
\put(70,10){}
\put(70,50){$s_1^m$}
\put(70,90){$s_2^m$}
\put(70,170){$s_{m}^m$}
\put(20,60){$t^m$}
\put(62.1,14){\vector(0,1){36}}
\put(62.1,54){\vector(0,1){36}}
\put(62.1,94){\vector(0,1){36}}
\put(60,14){\vector(-1,1){37}}
\put(62.1,152){\vector(0,1){19}}
\put(60.1,138){$\vdots$}
\end{picture}
\caption{Model $\frak{M}_m$}
\label{fig_Mm}
\end{center}
\end{figure}

  We now define formulas that will be true at the roots of models from
  $\cm{M}$. For $j \geqslant 0$, inductively define the formula
  $\langle b \rangle^j \psi$ as follows:
  $\langle b \rangle^0 \psi = \psi$;
  $\langle b \rangle^{k+1} \psi = \diam{b} \langle b \rangle^k \psi$.
  Next, for every $m \in \{1, \ldots, n + 1\}$, define
$$
A_m = \langle b \rangle^m \boxm{b} \falsehood \con \neg
\langle b \rangle^{m+1} \boxm{b} \falsehood \con \diam{b}
(\diam{b} \truth \con \boxm{b} \diam{b} \truth).
$$
\begin{lemma}
  \label{lem:alphas}
  Let $\mmodel{M}_k \in \cm{M}$ and let $x$ be a state in
  $\mmodel{M}_k$. Then, $\mmodel{M}_k, x \models A_m$ if, and only if,
  $k = m$ and $x = r_m$.
\end{lemma}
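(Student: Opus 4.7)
The plan is to characterize each of the three conjuncts of $A_m$ separately on $\mmodel{M}_k$, and then combine the three characterizations. First I would locate the states satisfying $\boxm{b}\falsehood$: inspecting the construction, $s_k^k$ is the only $b$-dead end, since $t^k$ has the reflexive loop and every other state has at least one outgoing $b$-arrow by the transitive closure. Next I would characterize the inner conjunction $\diam{b}\truth \con \boxm{b}\diam{b}\truth$. The state $t^k$ qualifies because its unique $b$-successor is itself and has a $b$-successor. No $s_j^k$ qualifies: for $j<k$, transitivity of $\rel{R}_b$ puts the dead-end $s_k^k$ among its $b$-successors, making $\boxm{b}\diam{b}\truth$ fail, and for $j=k$ even $\diam{b}\truth$ fails. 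The state $r_k$ also has $s_k^k$ as a $b$-successor, so the inner conjunction likewise fails there. Consequently the third conjunct $\diam{b}(\diam{b}\truth \con \boxm{b}\diam{b}\truth)$ holds exactly at states having $t^k$ as a $b$-successor, namely $r_k$ and $t^k$.

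For the first two conjuncts, I would observe that $\langle b\rangle^j \boxm{b}\falsehood$ holds at $x$ iff there is a $b$-path of length $j$ (counted in arrows) from $x$ to $s_k^k$. From $t^k$, the only $b$-reachable state is $t^k$, so no such path exists and the first conjunct fails at $t^k$. From $r_k$, the path $r_k, s_1^k, s_2^k, \ldots, s_k^k$ witnesses $\langle b\rangle^k \boxm{b}\falsehood$, and I would argue this is the maximum possible path length: the restriction of $\rel{R}_b$ to $\{r_k, s_1^k,\ldots,s_k^k\}$ is transitive and irreflexive (no such vertex has a way back to itself), hence a strict partial order on $k+1$ elements, so any $b$-path there visits at most $k+1$ distinct vertices; moreover, any path that ever visits $t^k$ is trapped at $t^k$ and cannot reach $s_k^k$. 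Thus $\langle b\rangle^j \boxm{b}\falsehood$ holds at $r_k$ precisely when $1 \leq j \leq k$, so the conjunction of the first two conjuncts holds at $r_k$ iff $k=m$.

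Combining the three analyses: $A_m$ fails at every $s_j^k$ because the third conjunct fails, fails at $t^k$ because the first conjunct fails, and holds at $r_k$ iff $k=m$, which is exactly the biconditional claimed. The only mildly delicate step is the upper bound on the length of $b$-paths ending at $s_k^k$; once one isolates the partial-order component $\{r_k, s_1^k,\ldots,s_k^k\}$ and notes that $t^k$ cannot lie on any such path, the bound follows from the impossibility of repeating vertices in an irreflexive transitive relation.
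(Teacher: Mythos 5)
Your proof is correct and is exactly the routine case analysis the paper has in mind: the paper's own proof of Lemma~\ref{lem:alphas} is simply the word ``Straightforward,'' and your verification (identifying $s_k^k$ as the unique $b$-dead end, $t^k$ as the unique state satisfying $\diam{b}\truth \con \boxm{b}\diam{b}\truth$, and bounding path lengths to $s_k^k$ via irreflexivity and transitivity on $\{r_k, s_1^k, \ldots, s_k^k\}$) supplies precisely the details the paper omits. No gaps.
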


\begin{proof}
  Straightforward.
\end{proof}

Now, define $$B_m = \diam{b} A_m.$$

Let $\sigma$ be a (substitution) function that, given an {\bf
  IPDL}-formula $\psi$, replaces all occurrences of $p_i$ in $\psi$ by
$B_i$, where $1 \leqslant i \leqslant n + 1$. Finally, define
$$\varphi^* = \sigma(\widehat{\vp})$$
to produce a variable-free formula $\varphi^*$. 
\begin{lemma}
  \label{lem:main_lemma}
  Formula $\varphi$ is satisfiable if, and only if, formula
  $\varphi^*$ is satisfiable.
\end{lemma}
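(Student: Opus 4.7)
My plan is to prove both directions of the equivalence. For ``$\varphi^*$ satisfiable implies $\varphi$ satisfiable,'' observe that $\varphi^* = \sigma(\widehat{\varphi})$ is a substitution instance of $\widehat{\varphi}$: were $\widehat{\varphi}$ unsatisfiable, closure of \IPDL\ under substitution would make $\neg\varphi^* = \sigma(\neg\widehat{\varphi})$ valid, contradicting the hypothesis, so $\widehat{\varphi}$ is satisfiable and Lemma~\ref{lem:varphi-truth} gives satisfiability of $\varphi$.

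For the converse, suppose $\varphi$ is satisfiable. By Lemma~\ref{lem:varphi-truth} together with Remark~\ref{remark:pn+1}, fix a model $\mmodel{M}'$ and a state $s_0$ such that $\sat{M'}{s_0}{\widehat{\varphi}}$ and $p_{n+1}$ is universally true in $\mmodel{M}'$. I would build a witness model $\mmodel{M}^*$ by adjoining to each $s \in \mmodel{M}'$, for every $i \in \{1, \ldots, n+1\}$ with $\sat{M'}{s}{p_i}$, a fresh disjoint isomorphic copy $\mmodel{M}_i^s$ of $\mmodel{M}_i$ with a single new $b$-edge from $s$ to the root of $\mmodel{M}_i^s$. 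Atomic relations between $\mmodel{M}'$-states are inherited from $\mmodel{M}'$, atomic relations internal to attached copies are inherited from $\mmodel{M}_i$, and no further edges are added. Because $p_{n+1}$ is universally true in $\mmodel{M}'$, a copy of $\mmodel{M}_{n+1}$ is attached to every $s \in \mmodel{M}'$.

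The central claim is that, in $\mmodel{M}^*$, $A_i$ holds exactly at the roots of attached copies of $\mmodel{M}_i$. Truth at each such root follows from Lemma~\ref{lem:alphas}, because the $b$-reachable subframe from the root is isomorphic to $\mmodel{M}_i$ (since copies have no outgoing atomic edges); the same lemma rules out non-root states of attached copies. The delicate sub-claim---and the main obstacle of the proof---is that $A_i$ is false at every $s \in \mmodel{M}'$. The key point is that the transitive closure forming the $b$-relation inside $\mmodel{M}_{n+1}^s$ supplies, by choosing how many chain-states to skip, a $b$-path from $s$ to the dead end of $\mmodel{M}_{n+1}^s$ of each length in $\{2, 3, \ldots, n + 2\}$. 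Hence for every $i \in \{1, \ldots, n+1\}$ the second conjunct $\neg \diam{b}^{i+1}\boxm{b}\falsehood$ of $A_i$ fails at $s$, as a $b$-path of length $i + 1 \leq n + 2$ to a dead end is available. It follows that $\sat{M^*}{s}{B_i}$ iff some attached copy-root for $\mmodel{M}_i$ lies among $s$'s $b$-successors iff $\mmodel{M}_i^s$ is attached iff $\sat{M'}{s}{p_i}$; in particular, the set of states of $\mmodel{M}^*$ satisfying $B_{n+1}$ coincides with the state space of $\mmodel{M}'$.

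A routine simultaneous induction on formulas and program terms now gives $\sat{M'}{s}{\psi} \Leftrightarrow \sat{M^*}{s}{\sigma(\psi')}$ for every subformula $\psi$ of $\varphi$ and every $s \in \mmodel{M}'$. The induction rests on the structural observation that attached copies are ``traps''---no atomic or test step leaves a copy---so any $\sigma(\alpha')$-path starting from an $\mmodel{M}'$-state and ending in $\mmodel{M}'$ stays inside $\mmodel{M}'$; moreover the guard $B_{n+1} \imp \sigma(\phi')$ arising from the translation of $\boxm{\alpha}\phi$ trivialises any successor that does leak into a copy, since such successors fail $B_{n+1}$. Finally, $\sat{M^*}{s_0}{\sigma(\Theta)}$ is verified directly from the coincidence of the $B_{n+1}$-set with $\mmodel{M}'$ together with the fact that gadget states have only gadget $\gamma$-successors. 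Combining with $\sat{M'}{s_0}{\varphi}$ from Lemma~\ref{lem:varphi-truth} yields $\sat{M^*}{s_0}{\varphi^*}$, as required.
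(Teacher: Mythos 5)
Your proof is correct and follows essentially the same route as the paper: the unsatisfiability direction via closure under substitution, and the satisfiability direction by attaching the gadget models $\mmodel{M}_i$ to states satisfying $p_i$ (using the universally attached copy of $\mmodel{M}_{n+1}$ and the path-length argument to show each $A_i$ fails inside the original model), followed by the truth-preservation induction with the $B_{n+1}$-guard. The only deviations---fresh per-state copies instead of shared gadgets, and stating the induction for $\psi$ rather than $\psi'$ (equivalent since $p_{n+1}$ is universally true)---are inessential.
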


\begin{proof}
  Suppose that $\varphi$ is not satisfiable.  Then, by
  Lemma~\ref{lem:varphi-truth}, $\widehat{\vp}$ is not
  satisfiable, either, and hence
  $\neg \widehat{\vp} \in {\bf IPDL}$.  Since {\bf IPDL} is closed
  under substitution, $\neg \varphi^* \in {\bf IPDL}$ and, thus,
  $\vp^*$ is not satisfiable.

  Suppose that $\vp$ is satisfiable.  Then, in view of
  Lemma~\ref{lem:varphi-truth} and Remark~\ref{remark:pn+1},
  $\sat{M}{s_0}{\widehat{\vp}}$ for some \mmodel{M} such that
  $p_{n+1}$ is true at every state of \mmodel{M} and some $s_0$ in
  \mmodel{M}.  Define model \mmodel{M'} as follows. Attach to
  \mmodel{M} all the models from \cm{M}; then, for every $x$ in
  $\mmodel{M}$, put $x \rel{R}_{b} r_m$ (where $r_m$ is the root of
  $\mmodel{M}_m \in \cm{M}$) exactly when \sat{M}{x}{p_m}.  Notice
  that $r_{n+1}$ is accessible in \mmodel{M'} from every $x$ in
  $\mmodel{M}$.

  To conclude the proof, it suffices to show that
  \sat{M'}{s_0}{\varphi^*}.  It is easy to check that
  \sat{M'}{s_0}{\sigma(\Theta)}.  It then remains to show that
  \sat{M'}{s_0}{\sigma(\vp')}.  To that end, it suffices to show that
  \sat{M}{x}{\psi'} if, and only if, \sat{M'}{x}{\sigma (\psi')}, for
  every subformula $\psi$ of $\varphi$ and every $x$ in $\mmodel{M}$.
  This can be done by induction on $\psi$; we only consider the base
  case, leaving the rest to the reader.

  Let \sat{M'}{x}{B_i}.  Then, for some $y$ in $\mmodel{M}'$, we have
  $x \rel{R}'_{b} y$ and \sat{M'}{y}{A_i}. This is only possible if
  $y$ is not in \mmodel{M}.  Indeed, suppose otherwise.  Then,
  \sat{M'}{y}{p_{n+1}}, and therefore, $y \rel{R}'_{b} r_{n+1}$.
  Hence, \sat{M'}{y}{\langle b \rangle^{i+1} \boxm{b} \falsehood}, and
  therefore, \notsat{M'}{y}{A_i}, resulting in a contradiction. Thus,
  $y$ is in $\mmodel{M}_m$, for some $m \in \{1, \ldots, n +
  1\}$.
  Then, by Lemma~\ref{lem:alphas}, $y = r_i$, and therefore, by
  definition of $\mmodel{M}'$, we have \sat{M}{x}{p_i}. The other
  direction is straightforward.
\end{proof}

\begin{theorem}
  \label{thr:ipdl}
  There exists a mapping that embeds {\bf IPDL} into its variable-free
  fragment in polynomial time.
\end{theorem}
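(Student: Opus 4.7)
The embedding is the map $\varphi \mapsto \neg(\neg\varphi)^*$; equivalently, working with satisfiability, $\varphi \mapsto \varphi^*$. Correctness follows immediately from Lemma~\ref{lem:main_lemma}: $\varphi$ is satisfiable iff $\varphi^*$ is, so $\varphi \in {\bf IPDL}$ iff $\neg\varphi$ is unsatisfiable iff $(\neg\varphi)^*$ is unsatisfiable iff $\neg(\neg\varphi)^* \in {\bf IPDL}$; and $\neg(\neg\varphi)^*$ is variable-free by construction (the substitution $\sigma$ replaces every $p_i$ by the variable-free $B_i$). So the only thing left is a polynomial-time bound on the construction, which is a straightforward size estimate carried out in three stages.

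First, I would bound $|\varphi'|$: the recursion defining $\cdot'$ leaves Boolean connectives, atomic variables, constants, and atomic program terms unchanged, and rewrites each modality $\boxm{\alpha}\phi$ into $\boxm{\alpha'}(p_{n+1} \imp \phi')$, which adds only a fixed number of symbols per modal operator. Hence $|\varphi'| = O(|\varphi|)$. Next, $\gamma = a_1 \choice \ldots \choice a_l$ satisfies $|\gamma| = O(|\varphi|)$, so $\Theta = p_{n+1} \con \boxm{\gamma^*}(\diam{\gamma} p_{n+1} \imp p_{n+1})$ also has size $O(|\varphi|)$, giving $|\widehat{\vp}| = O(|\varphi|)$.

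Second, I would bound the auxiliary variable-free formulas: for $m \in \{1,\ldots,n+1\}$, the iterated diamond $\langle b \rangle^m \boxm{b}\falsehood$ has length $O(m)$, so $|A_m| = O(m)$ and $|B_m| = O(m) = O(n) = O(|\varphi|)$. The substitution $\sigma$ replaces each of at most $|\widehat{\vp}| = O(|\varphi|)$ occurrences of variables in $\widehat{\vp}$ by a formula of size $O(|\varphi|)$, so $|\varphi^*| = O(|\varphi|^2)$. Each stage (computing $\varphi'$, listing the atomic program terms, building $\Theta$, building the $B_i$, and performing the substitution) is clearly computable within polynomial time.

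The main point that requires attention is the quadratic blow-up in the substitution step: one has to check that $n$, the number of propositional variables in $\varphi$, is bounded by $|\varphi|$ and that each $|B_i|$ is polynomial in $|\varphi|$, so that the product of ``number of occurrences'' and ``size of each replacement'' remains polynomial. No deeper obstacle arises, since the semantic content of the reduction has already been dispatched by Lemma~\ref{lem:main_lemma}.
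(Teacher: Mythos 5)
Your proposal is correct and follows essentially the same route as the paper, which derives the theorem directly from Lemma~\ref{lem:main_lemma} via the map $\vp \mapsto \vp^*$ (the paper leaves the polynomial-time bound and the passage from satisfiability to validity implicit, exactly as you spell them out). Your explicit $O(|\vp|^2)$ size estimate and the negation trick $\vp \mapsto \neg(\neg\vp)^*$ are just the routine details the paper omits.
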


We now look at the complexity-theoretic implications of
Theorem~\ref{thr:ipdl}.  It has been shown in~\cite{LL05} that the
fragment of {\bf IPDL} containing a single atomic program term is
2EXPTIME-complete.  This gives us the following:

\begin{theorem}
  The satisfiability problem for the fragment of {\bf IPDL} containing
  variable-free formulas with a single atomic program term is {\rm
    2EXPTIME}-complete.
\end{theorem}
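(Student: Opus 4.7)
The plan is to combine Theorem~\ref{thr:ipdl} with the 2EXPTIME-hardness result of~\cite{LL05} for {\bf IPDL} restricted to a single atomic program term. The upper bound is immediate, since the fragment in question is a sublanguage of full {\bf IPDL}, whose satisfiability problem is decidable in 2EXPTIME. The task therefore reduces to establishing the matching lower bound.

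For the lower bound, I would take an arbitrary {\bf IPDL}-formula $\varphi$ that uses only the atomic program term $a_1$ (and arbitrarily many propositional variables) and apply the polynomial-time embedding $\varphi \mapsto \varphi^*$ constructed in the proof of Theorem~\ref{thr:ipdl}. By Lemma~\ref{lem:main_lemma}, $\varphi$ is satisfiable if, and only if, $\varphi^*$ is, and $\varphi^*$ is variable-free. What remains is to verify that $\varphi^*$ still uses only the single atomic program term $a_1$, so that the reduction lands inside the target fragment.

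This is the only nontrivial step, but it amounts to a routine syntactic inspection of the construction preceding Theorem~\ref{thr:ipdl}. When $\varphi$ contains only $a_1$, the list $a_1, \ldots, a_l$ reduces to $a_1$, so $\gamma = a_1$; moreover, the lexicographically first atomic program term of $\varphi$ is $a_1$, so $b = a_1$ as well. Hence the translation $\cdot'$ introduces no new atomic programs, and the auxiliary formulas $A_m$ and $B_m$ used to define the substitution $\sigma$ are built solely out of $\diam{b}$, $\boxm{b}$, and propositional constants. Consequently, $\varphi^* = \sigma(\widehat{\varphi})$ involves only $a_1$. Combining this observation with the 2EXPTIME-hardness of the single-atomic-program fragment of {\bf IPDL} from~\cite{LL05} yields the desired lower bound, and the matching upper bound completes the argument.
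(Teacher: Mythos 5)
Your proposal is correct and follows essentially the same route as the paper, which obtains the result by combining the polynomial-time embedding of Theorem~\ref{thr:ipdl} (via Lemma~\ref{lem:main_lemma}) with the 2EXPTIME-completeness of single-atomic-program {\bf IPDL} from~\cite{LL05}. Your explicit check that $\gamma = b = a_1$, so that $\varphi^*$ introduces no new atomic program terms, is exactly the (implicit) observation the paper relies on.
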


We now point out how the work we have done so far for {\bf IPDL}
carries over to {\bf PDL} and {\bf PRSPDL}.

It is easy to check that the construction presented above works for
{\bf PDL}, as well, if we omit the details peculiar to {\bf IPDL}.
This gives us the following:

\begin{theorem}
  There exists a mapping that embeds {\bf PDL} into its variable-free
  fragment in polynomial time.
\end{theorem}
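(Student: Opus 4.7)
The plan is to observe that the construction used to prove Theorem~\ref{thr:ipdl} is syntactically conservative over the {\bf PDL}-fragment: nowhere does it introduce the program operations $\inter$ or $?$ peculiar to {\bf IPDL}. Concretely, if $\vp$ is a {\bf PDL}-formula, then the recursive translation $\cdot'$ commutes with program constructors and so outputs a {\bf PDL}-formula; the auxiliary formula $\Theta$ uses only $\union$, ${}^*$, $\diam{\gamma}$ and $\boxm{\gamma^*}$; and the formulas $A_m$ and $B_m$ are built purely from $\diam{b}$, $\boxm{b}$, Boolean connectives, $\truth$, and $\falsehood$. Consequently, the substitution $\sigma$ replaces each $p_i$ occurring in $\widehat{\vp}$ by a variable-free {\bf PDL}-formula $B_i$, so $\vp^* = \sigma(\widehat{\vp})$ is a variable-free {\bf PDL}-formula.

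Second, I would check that Lemmas~\ref{lem:varphi-truth}, \ref{lem:alphas}, and \ref{lem:main_lemma} go through for {\bf PDL} without modification. Their proofs depend only on the closure of the logic under substitution of formulas for variables, the syntactic identity $\widehat{\vp}(p_{n+1}/\truth) \equivalence \vp$, and a routine induction on subformulas of $\vp$; none of these steps refers to the semantics of $\inter$ or $?$. The concrete models $\mmodel{M}_m$ interpret only the single atomic program $b$, so dropping $\inter$ and $?$ does not affect them, and Lemma~\ref{lem:alphas} is unchanged. Hence $\vp$ is {\bf PDL}-satisfiable if, and only if, $\vp^*$ is {\bf PDL}-satisfiable.

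Finally, polynomial-time computability of $\vp \mapsto \vp^*$ is inherited from the {\bf IPDL} case: $\Theta$ has size linear in $|\vp|$, each $A_m$ has size polynomial in $|\vp|$ since $m \leqslant n+1 \leqslant |\vp|$, and both $\cdot'$ and $\sigma$ cause only polynomial blow-up. The main obstacle, which is minor, is simply a careful bookkeeping pass to confirm that no clause of the construction and no step in the proofs of the three supporting lemmas silently invoked $\inter$ or $?$. Once that inspection is done, the theorem follows exactly as Theorem~\ref{thr:ipdl} did.
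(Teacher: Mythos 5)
Your proposal is correct and matches the paper's own treatment: the paper simply observes that the construction for Theorem~\ref{thr:ipdl} works verbatim for {\bf PDL} once the clauses peculiar to {\bf IPDL} (namely $\inter$ and $?$) are dropped, which is exactly the bookkeeping check you describe. Your additional observations that $\Theta$, $A_m$, and $B_m$ lie in the {\bf PDL} language and that the polynomial bound carries over are precisely the ``easy to check'' details the paper leaves implicit.
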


Since satisfiability problem for {\bf PDL} with a single atomic
program term is EXPTIME-complete~\cite{FL79}, we have the following:

\begin{theorem}
  \label{cor:pdl}
  The satisfiability problem for the fragment of {\bf PDL} containing
  variable-free formulas with a single atomic program term is {\rm
    EXPTIME}-complete.
\end{theorem}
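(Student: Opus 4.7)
My plan is to split the claim into the two matching bounds. The EXPTIME upper bound is immediate: variable-free formulas with a single atomic program term form a syntactic subclass of arbitrary \textbf{PDL}-formulas, and \textbf{PDL}-satisfiability is in EXPTIME by the result of Fischer and Ladner cited earlier. So the real content is the EXPTIME lower bound, which I would derive by re-using the embedding from the previous theorem rather than building a new reduction from scratch.

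For the lower bound, I would argue that the polynomial-time embedding $\vp \mapsto \vp^*$ constructed for \textbf{PDL} actually restricts to a polynomial-time many-one reduction from the single-program fragment of \textbf{PDL} to the variable-free single-program fragment of \textbf{PDL}. The correctness (preservation of satisfiability) is already given by Lemma~\ref{lem:main_lemma} adapted to \textbf{PDL}; the new thing to check is that no fresh atomic program terms are introduced. Going down the construction: the translation $\cdot'$ acts as the identity on atomic program terms; the program term $\gamma = a_1 \choice \ldots \choice a_l$ collapses to $a_1$ when $l = 1$; the distinguished atomic program term $b$ is chosen from the atomic program terms already occurring in $\vp$ (or set to $a_1$ if none occur), hence coincides with $a_1$ in the single-program case; and the formulas $A_m$, $B_m$, as well as the final substitution $\sigma$, are built using only $b$ and propositional constants. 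Thus if $\vp$ uses only $a_1$, so does $\vp^*$, and no propositional variables remain.

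Combining this observation with the EXPTIME-hardness of satisfiability for \textbf{PDL} formulas with a single atomic program term, established in~\cite{FL79}, we obtain EXPTIME-hardness of the variable-free single-program fragment, completing the matching lower bound. I do not anticipate a real obstacle here: the only non-trivial step is the bookkeeping verification that the embedding preserves the number of atomic program terms, and that follows directly from inspecting the definitions of $\gamma$, $b$, $\Theta$, $A_m$, $B_m$, and $\sigma$ used in the construction of $\vp^*$.
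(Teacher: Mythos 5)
Your proposal is correct and matches the paper's argument: the paper also obtains this theorem by combining the polynomial-time embedding of \textbf{PDL} into its variable-free fragment (the \textbf{IPDL} construction with the intersection-specific details omitted) with the EXPTIME-completeness of single-program \textbf{PDL} from~\cite{FL79}, the upper bound being inherited trivially. Your explicit bookkeeping that $\gamma$, $b$, $A_m$, $B_m$, and $\sigma$ introduce no atomic program terms beyond $a_1$ is exactly the (implicit) content the paper relies on.
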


We next show how to modify the above argument for {\bf PRSPDL}. We
remind the reader that we confine our attention to {\bf PRSPDL} over
\mbox{$\ast$-se\-pa\-ra\-ted} models; other variants of this formalism
considered in~\cite{TinBal2014} can be treated in essentially the same
way.  We first need to construct the analogue of formula
$\widehat{\vp}$.  It is straightforward to define the translation
$\cdot'$:
\begin{center}
  \begin{tabular}{lcl}
    ${a_i}'$ & = & $a_i$, $\mbox{~~where~} i \in \{1, \ldots, l \}$; \\
    ${r_i}'$ & = & $r_i$, $\mbox{~~where~} i \in \{1, 2 \}$; \\
    ${s_i}'$ & = & $s_i$, $\mbox{~~where~} i \in \{1, 2 \}$; \\
    $(\alpha \comp \beta)'$ & = & $\alpha' \comp \beta'$;  \\
    $(\alpha\, ||\, \beta)'$ & = & $\alpha'\, ||\, \beta'$  \\
    $(\alpha^*)'$ & = & $(\alpha')^*$;  \\
    $(\phi?)'$ & = & $(\phi')?$;  \\
    ${p_i}'$ & = & $p_i$, $\mbox{~~where~} i \in \{1, \ldots, n \}$; \\
    $(\bottom)'$ & = & $\bottom'$;  \\
    $(\phi \imp \psi)'$ & = & $\phi' \imp \psi'$;  \\
    $(\boxm{\alpha} \phi)'$ & = & $\boxm{\alpha'} (p_{n+1} \imp \phi')$. 
\end{tabular}
\end{center}
We next define the analogue of formula $\Theta$. As {\bf PRSPDL} does
not have the operation of choice on program terms, we proceed as
follows.  Let
\begin{center}
  $\alpha^1_1 \dots \alpha^1_{n_1}$ \\
  \dots \\
  $\alpha^k_1 \dots \alpha^k_{n_k}$
\end{center}
be all sequences of nested program terms in $\varphi$.  Then,
$$ \Theta = p_{n+1} \con \bigwedge_{i=1}^k \bigwedge_{j=1}^{n_k - 1 }
\boxm{\alpha^i_1} \dots \boxm{\alpha^i_j} (\diam{\alpha^i_{j+1}}
p_{n+1} \imp p_{n+1}). $$
Finally, let $$ \widehat{\vp} = \Theta \con \varphi'.$$ From here on,
we argue exactly as in the case of {\bf IPDL} to obtain the following:
\begin{theorem}
  There exists a mapping that embeds {\bf PRSPDL} over
  \mbox{$\ast$-se\-pa\-ra\-ted} models into its variable-free
  fragment.
\end{theorem}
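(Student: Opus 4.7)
The plan is to mimic the IPDL argument step by step, being careful about the peculiarities of PRSPDL semantics. First, I would prove an analogue of Lemma~\ref{lem:varphi-truth}: that $\varphi$ is satisfiable if, and only if, $\widehat{\vp}$ is (as now redefined with the PRSPDL-specific $\Theta$). The \textit{only if} direction is purely syntactic, using that PRSPDL is closed under uniform substitution and that $\widehat{\vp}(p_{n+1}/\top) \equivalence \vp$. For the \textit{if} direction, given a model $\mmodel{M}$ and state $s_0$ with $\sat{M}{s_0}{\widehat{\vp}}$, I would define $\mmodel{M}'$ as the smallest submodel of $\mmodel{M}$ containing $s_0$ and closed under the following rule: if $x$ is in $\mmodel{M}'$ and, for some nested program sequence $\alpha^i_1, \ldots, \alpha^i_j$ appearing in $\vp$ with $j < n_i$, we have $x\,\rel{R}_{\alpha^i_1 \comp \ldots \comp \alpha^i_j}\, u$ and $u\,\rel{R}_{\alpha^i_{j+1}}\, y$ with $\sat{M}{y}{p_{n+1}}$, then $y \in \mmodel{M}'$. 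The choice of $\Theta$ guarantees that $p_{n+1}$ is universally true in $\mmodel{M}'$, and then induction on $\psi$ establishes $\sat{M}{s}{\psi'} \Leftrightarrow \sat{M'}{s}{\psi}$ for every subformula $\psi$ of $\vp$ and every $s$ in $\mmodel{M}'$. A remark analogous to Remark~\ref{remark:pn+1} follows immediately.

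Next, I would reuse the class $\cm{M}$ of models and the formulas $A_m$, $B_m$ verbatim, with $b$ the lexicographically first atomic program term of $\vp$ (or $a_1$ otherwise). Since $A_m$ and $B_m$ involve only $b$ and the Boolean/Kripke connectives $\diam{b}$, $\boxm{b}$, Lemma~\ref{lem:alphas} still applies---provided each $\mmodel{M}_m$ is viewed as a legitimate $\ast$-separated PRSPDL model by declaring the composition function $\ast$ to be nowhere defined on $\states{S}_m$, so that the accessibility relations $\rel{R}_{r_i}$, $\rel{R}_{s_i}$ on $\mmodel{M}_m$ are empty and the formulas $A_m$, $B_m$ retain their intended meaning. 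Setting $\sigma$ to replace each $p_i$ by $B_i$ and $\vp^* = \sigma(\widehat{\vp})$ yields a variable-free PRSPDL-formula.

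Finally, I would prove the analogue of Lemma~\ref{lem:main_lemma}: $\vp$ is satisfiable if, and only if, $\vp^*$ is satisfiable. The \textit{only if} direction is again closure under substitution. For \textit{if}, take a model $\mmodel{M}$ satisfying $\widehat{\vp}$ in which $p_{n+1}$ is universally true, and attach all models in $\cm{M}$ by adding $b$-edges from $x$ in $\mmodel{M}$ to the root $r_m$ exactly when $\sat{M}{x}{p_m}$, while leaving $\ast$ undefined on and into the attached states. The argument for $\sat{M'}{s_0}{\sigma(\Theta)}$ and the inductive verification of $\sat{M}{x}{\psi'} \Leftrightarrow \sat{M'}{x}{\sigma(\psi')}$ go through as before, with the base case handled by Lemma~\ref{lem:alphas} and the universal truth of $p_{n+1}$ in $\mmodel{M}$.

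The main obstacle is ensuring that the extra PRSPDL machinery---the function $\ast$ and the special program terms $r_1, r_2, s_1, s_2, ||$---does not introduce spurious edges either inside the attached models $\mmodel{M}_m$ or between $\mmodel{M}$ and the attached component. Leaving $\ast$ undefined on all new states is the key design choice; this forces $\rel{R}_{r_i}$, $\rel{R}_{s_i}$, and any $\rel{R}_{\alpha\,||\,\beta}$ edges to behave exactly as they do in $\mmodel{M}$, so that the inductive equivalence above can be checked clause-by-clause using the PRSPDL semantics and the fact that the translations $\cdot'$ and $\sigma$ commute with all program operators.
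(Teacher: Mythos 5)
Your proposal takes essentially the same approach as the paper, whose own argument for this theorem consists of defining the PRSPDL translation $\cdot'$ and the nested-sequence version of $\Theta$ and then stating that one argues exactly as in the {\bf IPDL} case (same auxiliary models $\cm{M}$, formulas $A_m$, $B_m$, substitution $\sigma$, and two-lemma structure). The extra details you supply---taking $\ast$ to be empty on and into the attached models so that $r_1, r_2, s_1, s_2$ and $||$ create no spurious edges, and adapting the submodel closure rule to the nested program sequences used in $\Theta$---are exactly the kind of routine adjustments the paper leaves implicit, so the proposal is correct and matches the intended proof.
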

\begin{theorem}
  The variable-free fragment of {\bf PRSPDL} over
  \mbox{$\ast$-se\-pa\-ra\-ted} models is undecidable.
\end{theorem}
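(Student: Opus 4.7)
The plan is to derive this theorem as an immediate consequence of the preceding embedding theorem, combined with the cited undecidability of \textbf{PRSPDL} over \mbox{$\ast$-se\-pa\-ra\-ted} models from~\cite{TinBal2014}. The strategy is a standard many-one reduction argument: since the embedding $\varphi \mapsto \varphi^*$ constructed in the preceding paragraphs (following the \textbf{IPDL} pattern, with the PRSPDL-specific translation $\cdot'$ and the adapted formula $\Theta$) is a computable map satisfying $\varphi$ satisfiable iff $\varphi^*$ satisfiable, decidability of the variable-free fragment would transfer back to decidability of full \textbf{PRSPDL}.

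More concretely, the steps I would present are as follows. First, invoke the preceding theorem to obtain the computable embedding of \textbf{PRSPDL} into its variable-free fragment. Second, recall from~\cite{TinBal2014} that the satisfiability problem for \textbf{PRSPDL} over \mbox{$\ast$-se\-pa\-ra\-ted} models is undecidable. Third, observe that if the variable-free fragment were decidable, then one could decide satisfiability of an arbitrary \textbf{PRSPDL}-formula $\varphi$ by computing $\varphi^*$ and then running the decision procedure for the variable-free fragment on $\varphi^*$; by Lemma-style correctness of the embedding (the \textbf{PRSPDL} analogues of Lemmas~\ref{lem:varphi-truth} and~\ref{lem:main_lemma}, established by reproducing the \textbf{IPDL} argument in the new setting), this would yield the correct answer, contradicting undecidability.

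The only point that requires any care is to confirm that the map $\varphi \mapsto \varphi^*$ is genuinely computable, which is the main (and only real) obstacle. This is clear from inspection: the translation $\cdot'$ is defined by a finite recursion on syntax; the formula $\Theta$ is a finite conjunction whose indices ($k$ and $n_i$) are read off directly from the syntactic structure of $\varphi$; and the substitution $\sigma$ replacing each $p_i$ by $B_i$ involves only variable-free formulas $B_i$ built from the lexicographically first atomic program term of $\varphi$ (or $a_1$ if none is present) using standard modal combinators. No appeal to polynomial-time complexity is needed here, only computability; this is why the theorem, unlike its \textbf{IPDL} and \textbf{PDL} counterparts, is stated purely as an undecidability result without a complexity bound.
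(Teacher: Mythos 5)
Your proposal is correct and follows essentially the same route as the paper: the theorem is obtained immediately from the preceding embedding theorem (the PRSPDL analogue of the \textbf{IPDL} construction, argued ``exactly as in the case of \textbf{IPDL}'') together with the undecidability of \textbf{PRSPDL} over $\ast$-separated models from~\cite{TinBal2014}, via the standard many-one reduction you describe. Your remark that only computability of $\varphi \mapsto \varphi^*$ is needed, which is why no complexity bound is stated here, also matches the paper's presentation.
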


\begin{remark}
  It is well-known that the consequence relation for propositional
  dynamic logics is not compact, as the formula $\boxm{a^\ast} \vp$
  follows from the infinite set $\{\boxm{a}^n \vp : n \geqslant 0\}$
  of formulas but not from any of its finite subsets; thus, the
  consequence relation is not reducible to satisfiability for
  formulas.  The technique presented above can be used to reduce the
  consequence relation for the logics we have considered to the
  consequence relation for their variable-free fragments.  To that
  end, unless the number of propositional variables occurring in the
  premises is finite, we need to use an extra atomic program term
  corresponding to the accessibility relation connecting the roots of
  the models attached in the proof of Lemma~\ref{lem:main_lemma} to
  the original model.  This is necessary as in the proof of
  Lemma~\ref{lem:main_lemma} we relied on the variable $p_{n+1}$, used
  as a marker of the worlds of the original model, having the maximal
  index of all the variables of the formula $\vp$.
\end{remark}

\section{Conclusion}
\label{sec:conclusion}

We have shown that for three variants of propositional dynamic logic
representative of various complexity classes, broadly understood, the
complexity of satisfiability remains the same if we restrict the
language to formulas built out of propositional constants, i.e.,
without the use of propositional variables.  This is a consequence of
the richness of the expressive power of the variable-free
fragments---as we have shown, they are as expressive as the logics
with an infinite supply of propositional variables.

We suspect that these results are representative of how PDL-style
formalisms behave.  If this is indeed so, the important question for
future research is to find out if there are ways to tame the
complexity of satisfiability that might be applicable en masse to a
wide range of PDL-style logics and that might be of relevance to how
these formalisms are applied in practice.

\section*{Acknowledgements}

We thank the two anonymous referees for their remarks, which have
helped to substantially improve the presentation of the paper. In
particular, the paper has become shorter and more focused.

\end{document}